\newcolumntype{C}[1]{>{\centering\arraybackslash}p{#1}}
\newtheorem{theorem}{\bf Theorem}[section]
\newtheorem{cor}{\bf Corollary}[section]
\newtheorem{lemma}{\bf Lemma}[section]
\newcommand{\RR}{\mathbb{R}}
\newcommand{\CC}{\mathbb{C}}
\newcommand{\NN}{\mathbb{N}}
\newcommand{\ZZ}{\mathbb{Z}}
\newcommand{\Om}{\Omega}
\newcommand{\ds}{\displaystyle}
\newcommand{\p}{\partial}
\newcommand{\pd}[2]{\frac {\p #1}{\p #2}}
\newcommand{\eqnref}[1]{(\ref {#1})}
\renewcommand{\qed}{\hfill $\Box$ \medskip}
\newcommand{\beq}{\begin{equation}}
\newcommand{\eeq}{\end{equation}}
\newcommand{\SingleOmega}{\mathcal{S}_{\partial\Omega}}
\newcommand{\KstarOmega}{\mathcal{K}_{\partial\Omega}^{*}}
\newcommand{\KOmega}{\mathcal{K}_{\partial\Omega}}
\newcommand{\Kcal}{\mathcal{K}}
\newcommand{\Scal}{\mathcal{S}}
\newcommand{\Dcal}{\mathcal{D}}
\newcommand{\Rtwo}{\mathbb{R}^2}
\newcommand{\la}{\langle}
\newcommand{\ra}{\rangle}
\newcommand{\RRe}{\operatorname{Re}}
\newcommand{\IIm}{\operatorname{Im}}
\newcommand{\iu}{\mathrm{i}}
\numberwithin{equation}{section}
\numberwithin{figure}{section}
\begin{document}

\newcommand{\TheTitle}{An extension of the Eshelby conjecture to domains of general shape in anti-plane elasticity}
\newcommand{\TheAuthors}{Doosung Choi, Kyoungsun Kim and Mikyoung Lim}
\title{{\TheTitle}\thanks{{
The third author is the corresponding author. This work was supported by the National Research Foundation of Korea(NRF) funded by the Ministry of Education (NRF-2016R1A2B4014530 and NRF-2019R1A6A1A10073887).
}}}
\author{
Doosung Choi\thanks{\footnotesize Department of Mathematical Sciences, Korea Advanced Institute of Science and Technology, Daejeon 34141, Korea ({7john@kaist.ac.kr},  {mklim@kaist.ac.kr})}\and 
Kyoungsun Kim\thanks{\footnotesize Department of Mathematical Sciences, Seoul National University, Seoul 08826, Korea (kgsunsis@snu.ac.kr)} \and Mikyoung Lim\footnotemark[2]}
\date{\today}
\maketitle

\begin{abstract}
According to the Eshelby conjecture, an ellipse or ellipsoid is the only shape that induces an interior uniform strain under a uniform far-field loading. We extend the Eshelby conjecture to domains of general shape for anti-plane elasticity. Specifically, we show that for each positive integer $N$, an inclusion induces an interior uniform strain under a polynomial loading of degree $N$ if and only if the exterior conformal map of the inclusion is a Laurent series of degree $N$. Furthermore, for the isotropic case, we characterize the shape of an inclusion by only using the first-degree polynomial loading and explicitly solve the interior potential of the inclusion in terms of the Grunsky coefficients.
\end{abstract}
\noindent {\footnotesize {\bf AMS subject classifications.} {30C35; 35J05; 45P05} } 

\noindent {\footnotesize {\bf Key words.} {Eshelby conjecture, Anti-plane elasticity, Faber polynomial}}


\section{Introduction}\label{sec:intro}
We consider the field perturbation due to the presence of an inclusion in a homogeneous background. An inclusion whose material parameter is different from that of the background brings a field perturbation to its exterior and interior. 
 The resulting perturbation depends on the shape of the inclusion as well as the material parameter, and certain shapes admit extremal properties. In 1957, Eshelby discovered that an ellipsoid embedded in an infinite elastic medium induces a uniform interior strain for uniform loadings \cite{Eshelby:1957:DEF}. Then, he made the following conjecture, which is known as {\it the Eshelby conjecture}, in \cite{Eshelby:1961:EII}: ``Among closed surfaces, the ellipsoid alone has this convenient property.''
In this study, we extend this uniformity property to inclusions of general shape by using higher-order loadings, for anti-plane elasticity.

The Eshelby conjecture was proven by Sendeckyj \cite{Sendeckyj:1970:EIP} for two dimensions and by Ru and Schiavone \cite{Ru:1996:EIA} for anti-plane elasticity.
To prove the conjecture, Ru and Schiavone \cite{Ru:1996:EIA} and Sendeckyj \cite{Sendeckyj:1970:EIP} used complex analytic function theory. Kang and Milton \cite{Kang:2008:SPS} provided an alternative proof using the hodograph transformation.
Various non-ellipsoidal shapes were shown not to satisfy the Eshelby uniformity property in three dimensions. For example, Rodin \cite{Rodin:1996:EIP} considered polyhedral inclusions, Markenscoff \cite{Markenscoff:1997:SEI} inclusions with a planar piece on their boundary, and Lubarda and Markenscoff \cite{Lubarda:1998:AEP} non-convex inclusions. Markenscoff obtained that the space of domains satisfying the Eshelby uniformity property forms a nine-dimensional manifold \cite{Markenscoff:1998:ICE}.
Finally, the conjecture for three dimensions was proven by Kang and Milton \cite{Kang:2008:SPS} and Liu \cite{Liu:2008:SEC} in relation to the Newtonian potential. We recommend that readers refer to the review article by Kang \cite{Kang:2009:CPS} to determine more relations of the Eshelby conjecture with the P\'{o}lya-Szeg\"{o} conjecture and the Newtonian potential problem.

Shapes other than ellipses or ellipsoids can also satisfy the uniformity property with a modified condition from the Eshelby conjecture. Finding Eshelby inclusions, which denote inclusions undergoing a uniform eigenstress with either a far-field loading or a modified condition, has practical applications for designing composites that result in small variances in internal stresses.
A multiply connected inclusion can satisfy the uniformity property \cite{Cherepanov:1974:IPP, Kang:2008:IPS, Lee:1977:ESE, Liu:2008:SEC}, and so can a non-elliptical simply connected inclusion on a bounded domain containing the inclusion with some boundary condition \cite{Bardsley:2017:CGB, Kang:2011:SBV, Lim:2018:NNC}. Kang \emph{et al.} constructed Eshelby inclusions with two disjoint components in two dimensions using the hodograph transformation technique \cite{Kang:2008:IPS}, and Liu designed multiply connected ones in two and three dimensions with a variational approach \cite{Liu:2008:SEC}.
Ru derived analytic solutions for Eshelby inclusions of arbitrary shape in a plane or half-plane in terms of some complex analytic functions \cite{Ru:1999:ASE,Ru:2000:EPT}. Kang \emph{et al}. \cite{Kang:2011:SBV} and Bardsley \emph{et al}. \cite{Bardsley:2017:CGB} designed various Eshelby inclusions embedded in a bounded domain using the hodograph transformation technique.
Wang \emph{et al}. found Eshelby inclusions of arbitrary shape with the traction-free condition on a curvilinear boundary \cite{Wang:2018:EIA}. Lim and Milton found Eshelby inclusions of arbitrary shape embedded in a bounded domain using the conformal mapping technique \cite{Lim:2018:NNC}; see also \cite{Milton:2001:NCI, Wang:2019:ECE}. We refer to the works by Vigdergauz \cite{Vigdergauz:1976:IEI} and by Grabovsky and Kohn \cite{Grabovsky:1995:MMEb} for Vigdergauz microstructures, which are inclusions with the uniformity property with periodic boundary conditions.
Furthermore, analytic and numerical methods to compute the elastic tensor (often called the Eshelby tensor field) for inclusions of various shapes have been developed \cite{Chen:2015:NEE, Gao:2010:SGS, Huang:2011:EEE, Huang:2009:EEE,  Lee:2016:EEF, Onaka:2001:AET, Zou:2010:EPN}. 
Additionally, note that the uniformity property can play a significant role in imaging problems. For example, a location search algorithm for a ball-shaped anomaly was developed by using the fact that the induced electric field is uniform inside a ball \cite{Kwon:2002:RTA}.

In this paper, we investigate the shape of an inclusion which undergoes a uniform eigenstress with a far-field polynomial loading of given degree. If an inclusion has a general shape such that the exterior conformal mapping corresponding to this inclusion is a Laurent series with terms of order $\leq-2$, then the resulting solution due to a uniform loading contains some terms of degree $\geq2$ in the interior of the inclusion (see, for example, \cite{Ammari:2019:SRN}).
Instead, the stress field is uniform for a polynomial loading as shown in \cite{Sendeckyj:1970:EIP} for the plane elastostatic problem; one can observe this case from \cite{Ru:1996:EIA} for anti-plane elasticity.
In fact, the Eshelby conjecture can be generalized to characterize the shape of which an inclusion undergoes a uniform eigenstress with a polynomial loading of given degree (for details, see Theorem~\ref{theorem:main:a}).
To the best of our knowledge, there have been no reports on extending the Eshelby conjecture to provide a characterization scheme for inclusions of general shape. Furthermore, we explicitly find the polynomial loading which induces a uniform strain inside the inclusion in a simple form by using the Faber polynomials. For the isotropic case, we also explicitly express the field perturbation using the Grunsky coefficients and find a characterization scheme for the shape of the inclusion by using only a first-degree polynomial loading.

Our analysis is based on the geometric series expansions of the layer potential operators, recently developed by Jung and Lim \cite{Jung:2018:SSM}. This method provides a new powerful scheme to address the conductivity inclusion problems. We emphasize that with the density basis functions constructed in \cite{Jung:2018:SSM}, the interface problem of anti-plane elasticity in the presence of a simply connected inclusion can be reformulated to a matrix problem. This matrix formulation gives us explicit relations between the exterior conformal mapping of the inclusion and the applied far-field loading, given that the resulting field is uniform inside the inclusion.
It is worth remarking that by using this solution method, one can construct neutral inclusions of multi-layer structure \cite{Choi:2018:GME:preprint} and derive an asymptotic formula to approximate the shape of an inclusion by considering the inclusion as a perturbation of its equivalent ellipse \cite{Choi:2020:ASR}. The decay property of eigenvalues of the Neumann-Poincar\'{e} operator was also obtained \cite{Jung:2020:DEE}.

The remainder of this paper is organized as follows. We state the main results in section \ref{sec:main}. 
Section \ref{sec:Pre} provides the boundary integral formulation for the conductivity transmission problem and the geometric series expansions for the layer potential operators. 
In section \ref{subsection:mainproof}, we derive essential relations for the density function in the boundary integral formulation. 
The proofs of the main results are presented in section \ref{sec:proof}. We finish with the conclusion in section \ref{ref:conclusion}.

\section{Main results}\label{sec:main}
Let $\Om$ be a simply connected, bounded planar domain with $C^{1,\alpha}$ boundary for some $\alpha\in(0,1)$.
We assume $\Om$ has a constant, possibly anisotropic, conductivity $\bm{\sigma}$. 
We consider the transmission problem
\beq\label{cond_eqn0}
\begin{cases}
\ds\nabla\cdot\left(\bm{\sigma}\chi(\Om)+\bm{I}\chi(\RR^2\setminus\overline{\Om})\right)\nabla u=0\quad&\mbox{in }\RR^2\setminus \partial \Omega, \\
\ds u(x) - H(x)  =O({|x|^{-1}})\quad&\mbox{as } |x| \to \infty
\end{cases}
\end{equation}
for a given far-field loading $H$, which is an entire harmonic function. The symbol $\chi$ indicates the characteristic function, and $\bm{I}$ is the $2\times2$ identity matrix.
We assume $\bm{I}-\bm{\sigma}$ is either positive or negative definite and, hence, \eqnref{cond_eqn0} is solvable (see \cite{Escauriaza:1993:RPS}).
One can easily show that $u$ satisfies
\beq\label{cond_eqn0:bc}
u\big|^+=u\big|^-,\quad\nu\cdot\nabla u\big|^+=\nu\cdot\bm{\sigma}\nabla u\big|^-\quad\mbox{on }\p\Om,
\eeq
where $\nu$ denotes the outward unit normal vector to $\p\Om$, and the symbols $+$ and $-$ indicate the limit from the exterior and interior of $\Om$, respectively.

Let us introduce some terminology before stating main results. 
We identify $x=(x_1,x_2)$ in $\RR^2$ with $z=x_1+\iu x_2$ in $\CC$. The symbols $\operatorname{Re}$ and $\operatorname{Im}$ indicate the real and imaginary parts of complex numbers, respectively.
From the Riemann mapping theorem, there exists a unique $\gamma>0$ and a conformal mapping $\Psi$ from $\{w\in\CC:|w|>\gamma\}$ onto $\CC\setminus\overline{\Om}$ satisfying $\Psi(\infty)=\infty$ and $\Psi'(\infty)=1$. 
 This map admits the Laurent series expansion
\beq\label{conformal:Psi}
\Psi(w)=w+a_0+\frac{a_1}{w}+\frac{a_2}{w^2}+\cdots
\eeq
with complex coefficients $a_k$; one can find the derivation in \cite[Chapter 1.2]{Pommerenke:1992:BBC:book}.
The exterior conformal mapping $\Psi$ determines the so-called Faber polynomials $F_m(z)$, which are monic polynomials of degree $m$ determined by $a_0,\dots,a_{m-1}$ and form a basis for analytic functions in $\Om$ \cite{Smirnov:1968:FCV}. 
We define a formal infinite series associated with $\Om$ in terms of Faber polynomials as
$$\mathfrak{F}(z):=\sum_{m=2}^\infty \frac{\overline{a_m}}{\gamma^{2m}} F_m(z).$$

For an ellipse, $a_m=0$ for all $m\geq2$ and the corresponding formal infinite series $\mathfrak{F}(z)$ is the zero function. By generalizing this characterization of an ellipse, we define classes of shape:
For each $N\in\NN$, we call $\Om$ a {\it domain of order $\it N$} if
\beq\label{Psi:finite:N}
\Psi(w)=w+a_0+\frac{a_1}{w}+\cdots+\frac{a_N}{w^N}\ (a_N\neq0).
\eeq
We call a disk, as well as ellipses, a domain of order $1$. 
If $\Om$ is a domain of order $N\geq2$, then $\mathfrak{F}(z)$ is a polynomial of degree $N$.
On the other hand, if the far-field loading is a real harmonic polynomial of degree $N$, then it can be expressed as
\beq \label{H:expan}
H(x)=\frac{1}{2}\sum_{m=0}^{N} \left(\alpha_m F_m(z)+\overline{\alpha_m F_m(z)}\right)
\eeq
with some complex coefficients $\alpha_m$. Here, $\alpha_m=0$ for all $m\ge N+1$.

 The main object of this study is to find an equivalent condition for $\Om$ to induce a uniform strain inside $\Om$ under a far-field loading of given finite degree.
First, we extend the Eshelby conjecture in anti-plane elasticity and characterize domains of higher-order (see Theorem \ref{theorem:main:a} below). One can consider this result as an extension of {\it the strong Eshelby conjecture}, following the terminology in \cite{Kang:2008:SPS}, in the sense that the uniformity property for just one loading implies the shape of the inclusion. The proofs of the main results are provided in section~\ref{sec:proof}.
\begin{theorem}[Anisotropic case]\label{theorem:main:a}
Assume that $\Om$ is a simply connected, bounded planar domain with $C^{1,\alpha}$ boundary for some $\alpha\in(0,1)$ and $\bm{\sigma}$ is possibly anisotropic. 
For arbitrary $N\in\NN$, the followings are equivalent:
\begin{itemize}
\item[\rm(a)]
$\Om$ is a domain of order $N$.
\item[\rm(b)]
For some polynomial loading $H$ of degree $N$, the solution $u$ to \eqnref{cond_eqn0} has a uniform strain inside $\Om$.
\end{itemize}
\end{theorem}

Furthermore, we can explicitly find the far-field loading functions that induce the uniformity in terms of Faber polynomials. To state the formula, we need the following $2\times2$ real matrix
\begin{align}
\boldsymbol{\tau}(t)=
(1-2t)
\begin{bmatrix}
\ds\operatorname{Re}\left\{{\tau_1}(t)\right\} &-\operatorname{Im}\left\{{\tau_2}(t)\right\}\\[2mm]
\ds\operatorname{Im}\left\{{\tau_1}(t)\right\} &~~\operatorname{Re}\left\{{\tau_2}(t)\right\}
\end{bmatrix}
\end{align}
with
\begin{align}
{\tau_1}(t)=\frac{\dfrac{a_1}{\gamma^2}+2t}{\left|\dfrac{a_1}{\gamma^2}\right|^2-4t^2}\label{def:beta1},\quad
{\tau_2}(t)=\frac{-\dfrac{a_1}{\gamma^2}+2t}{\left|\dfrac{a_1}{\gamma^2}\right|^2-4t^2}.
\end{align}
As is popularly known as the Bieberbach conjecture \cite{Bieberbach:1916:KDP}, the coefficient $a_1$ of the exterior conformal mapping $\Psi$ satisfies
\beq\notag
|a_1|<\gamma^2.
\eeq
Hence, $\boldsymbol{\tau}(t)$ is invertible for all $t\in(-\infty,-1/2]\cup(1/2,\infty)$.
 In particular, $\bm{\tau}(-1/2)$ is invertible.

\begin{theorem}[Anisotropic case]\label{theorem:main:bc}
Let $\Om$ be a domain of arbitrary finite order and $\bm{\sigma}$ be possibly anisotropic.
For any real constant vector $(e_1,e_2)$, $u$ admits the uniform interior strain $\nabla u=(e_1,e_2)$ for the polynomial loading $H(x)$ given by
\begin{align}\label{H:aniso:cond}
H(x)=f_1x_1+f_2x_2+c_1\operatorname{Re}\big\{z+{{\tau_1}(-1/2)}\, \mathfrak{F}(z)\big\}+c_2\operatorname{Im}\big\{z-{{\tau_2}(-1/2)}\, \mathfrak{F}(z)\big\}
\end{align}
with
\begin{equation}\label{f:condition}
\begin{gathered}
\begin{bmatrix}
\ds f_1\\
\ds f_2
\end{bmatrix}
={\bm\sigma}
\begin{bmatrix}
\ds e_1\\
\ds e_2
\end{bmatrix},\quad
\begin{bmatrix}
\ds c_1\\
\ds c_2
\end{bmatrix}
=\boldsymbol{\tau}(-1/2)^{-1}
(\bm{I}-\bm{\sigma})
\begin{bmatrix}
\ds e_1\\
\ds e_2
\end{bmatrix}.
\end{gathered}
\eeq
In fact, $H$ given by \eqnref{H:aniso:cond} are the only far-field loadings of finite degree that induce a uniform  strain in $\Om$.
\end{theorem}
Interestingly, for a domain of any finite order, the far-field loading that admits a uniform eigenstress is a linear combination of two functions.
We also emphasize that the formulations \eqnref{H:aniso:cond}--\eqnref{f:condition} depend only on $\mathfrak{F}(z)$ and $\bm{\sigma}$.

If $\bm{\sigma}$ is isotropic, we can also characterize the shape of an inclusion by only using the first-degree polynomial loadings as follows. 
\begin{theorem}[Isotropic case]\label{theorem:iso:polnomial:a}
Assume $\bm{\sigma}=\sigma\bm{I}$ and set $\lambda=\frac{\sigma+1}{2(\sigma-1)}$. Let $u$ be the solution to \eqnref{cond_eqn0}.
For arbitrary $N\in\NN$, the followings are equivalent:
\begin{itemize}
\item[\rm(a)]
$\Om$ is a domain of order $N$.
\item[\rm(b)]
For some polynomial loading $H$ of degree $1$, the function $(\lambda+\frac{1}{2})(u-H)-\Dcal_{\p\Om}\left[(u-H)|_{\p\Om}\right]$ is a polynomial of degree $N$ in $\Om$. 
\end{itemize}
Here, $\Dcal_{\p\Om}$ denotes the double-layer potential associated with $\Om$ (see \eqnref{def:doublelayer}). 
\end{theorem}

In addition, we can explicitly express $u$ as a Faber polynomial expansion for a given polynomial loading of arbitrary degree. The expansion coefficients have a matrix form as follows.
\begin{theorem}[Isotropic case]\label{theorem:iso:polnomial:b}
Let $\Om$ be a $C^{1,\alpha}$ domain for some $\alpha\in(0,1)$. Assume $\bm{\sigma}=\sigma\bm{I}$ and set $\lambda=\frac{\sigma+1}{2(\sigma-1)}$. 
For a polynomial loading $H$ given by \eqnref{H:expan} with arbitrary degree $N$, the solution $u$ to \eqnref{cond_eqn0} can be expanded as
$$u(x)=\frac{1}{2}\sum_{m=0}^{\infty}\left(\beta_m F_m(z)+\overline{\beta_m F_m(z)}\right)\quad\mbox{in }{\Om}$$
with 
$$\bm{\beta} = \left( \lambda-\frac{1}{2} \right) \left( \lambda\bm{\alpha}\gamma^{2\NN} + \frac{1}{2} \overline{\bm{\alpha} C} \right)  \left( \lambda^2 I - \frac{\gamma^{-2\NN}C\gamma^{-2\NN}\overline{C}}{4} \right)^{-1} \gamma^{-2\NN},$$
where $\bm{\alpha} = (\alpha_m)_{m=1}^\infty$ and $\bm{\beta} = (\beta_m)_{m=1}^\infty$ are infinite row vectors, and $C=(c_{mk})_{m,k=1}^\infty$ is a semi-infinite matrix defined by the Grunsky coefficients $c_{mk}$ of $\Om$ (see \eqnref{eqn:Faberdefinition} for the definition), and $\gamma^{-2\NN}$ is a semi-infinite diagonal matrix whose $(m,m)$-entry is $\gamma^{-2m}$ for each $m\in\NN$.
\end{theorem}

Figures~\ref{fig:aniso}--\ref{fig:iso2} illustrate the result in Theorem \ref{theorem:main:bc}. In all examples, the potential difference in $u$ between the neighboring level curves is $1/2$. 
The loading function $H$ is given by \eqnref{H:aniso:cond}--\eqnref{f:condition} such that $(f_1+c_1,f_2+c_2)=(0,1)$ in Figure~\ref{fig:aniso}; $(c_1,c_2)=(1,0)$ in the first row of Figure~\ref{fig:iso2}; $(c_1,c_2)=(0,1)$ in the second row of Figure~\ref{fig:iso2}.

\captionsetup[subfigure]{labelformat=empty}
\begin{figure}[H]
\centering
\begin{subfigure}{0.35\textwidth}
\centering
\includegraphics[height=3.5cm]{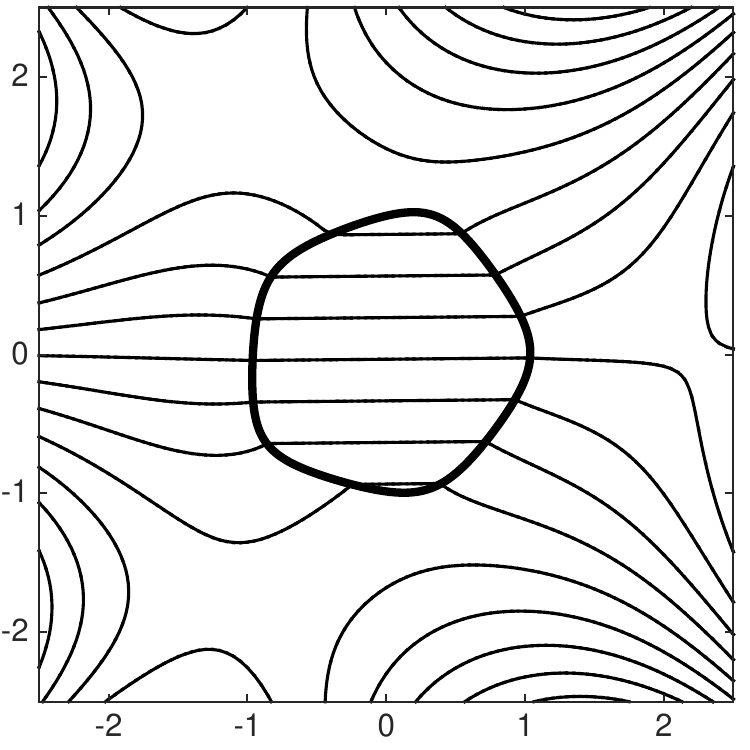}
\caption{$\sigma=\begin{bmatrix}0.99  &0 \\ 0& 0.2\end{bmatrix}$}
\end{subfigure}
\begin{subfigure}{0.35\textwidth}
\centering
\includegraphics[height=3.5cm]{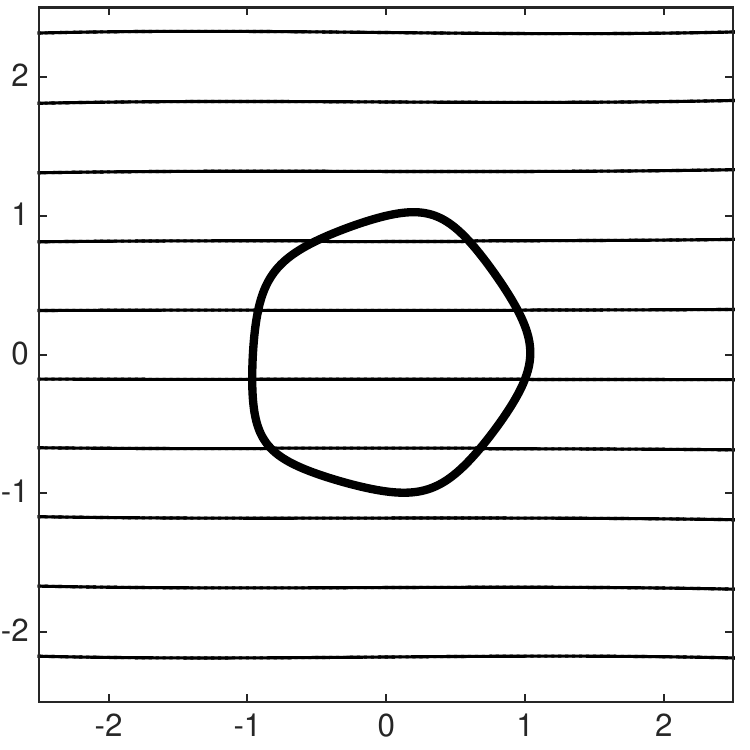}
 \caption{$\sigma=\begin{bmatrix}0.2  &0 \\ 0& 0.99\end{bmatrix}$}
\end{subfigure}
\caption{Anisotropic case. Thin curves are level curves of $u$ and thick curves indicate $\partial \Om$. 
The domain $\Om$ is of order $N=4$, where $\gamma=1$, $a_1=0.1\iu*s$, $a_2=-0.05\iu*s$, $a_3=0$, and $a_4=0.15*s$ with $s=\frac{1}{4}$. 
 }
\label{fig:aniso}
%
%
%
\end{figure}

\begin{figure}[H]
\begin{subfigure}{\textwidth}
\centering
\captionsetup{labelformat=empty}
\begin{minipage}{0.22\linewidth}
\includegraphics[width=\linewidth]{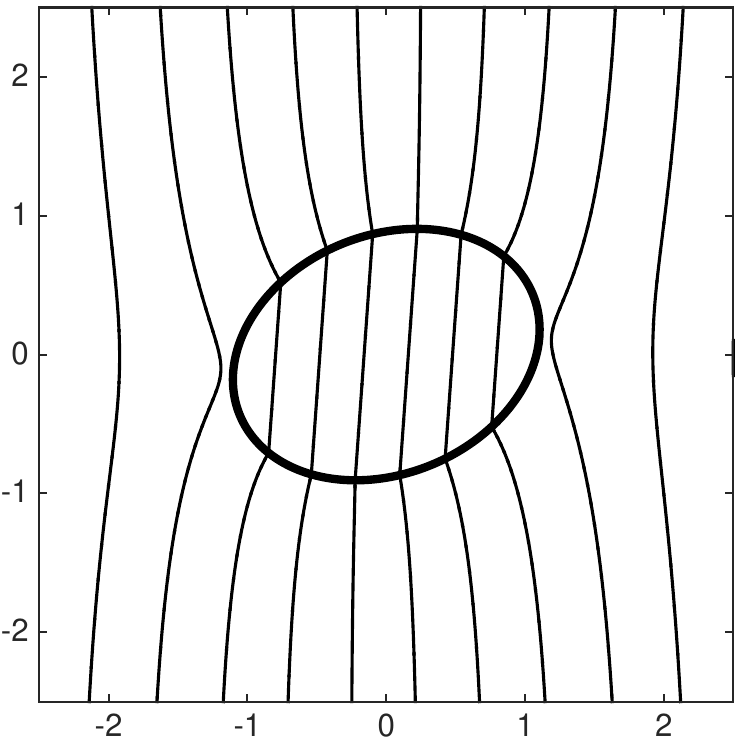}\caption{$N=1$}
\end{minipage}\qquad
\begin{minipage}{0.22\linewidth}
\includegraphics[width=\linewidth]{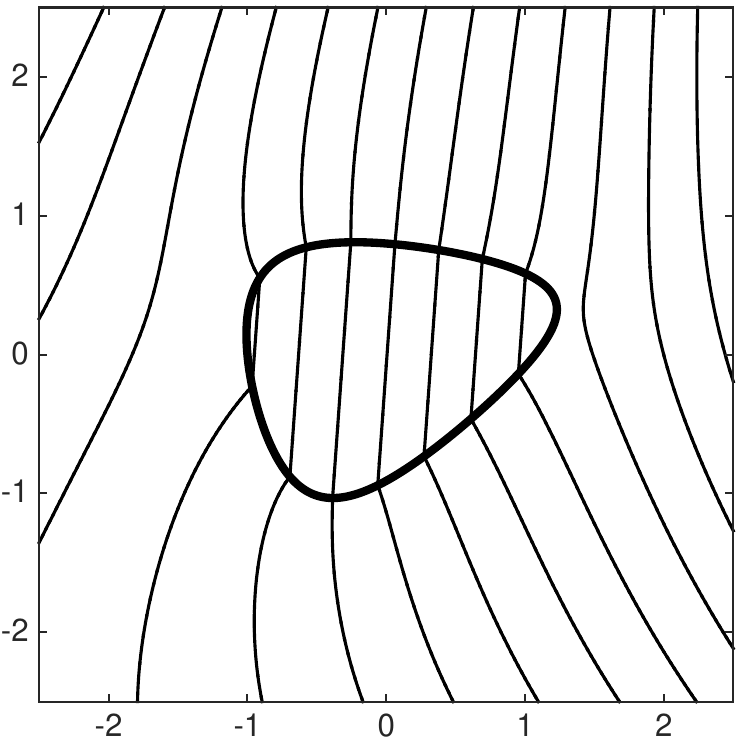}\caption{$N=2$}
\end{minipage}\qquad
\begin{minipage}{0.22\linewidth}
\includegraphics[width=\linewidth]{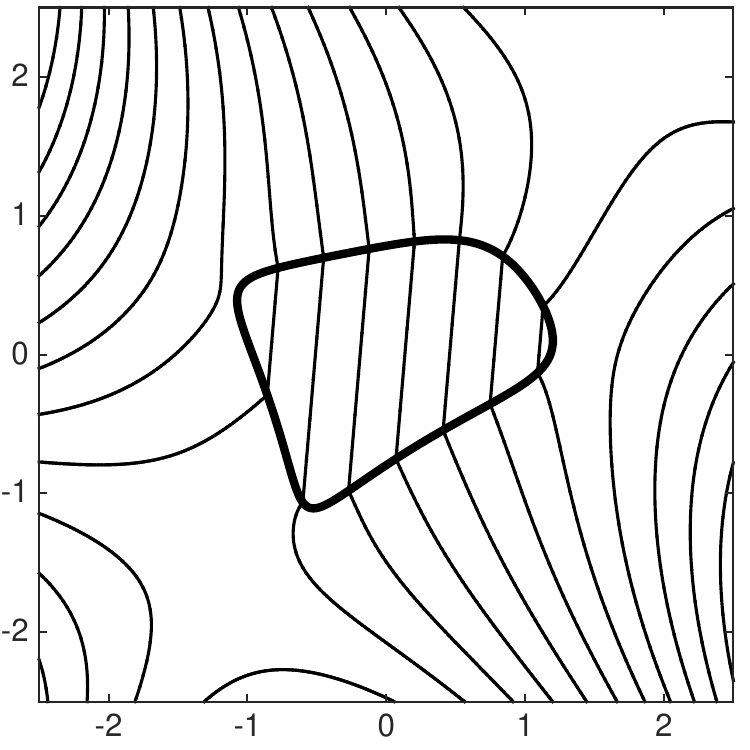}\caption{$N=3$}
\end{minipage}
\end{subfigure}\vskip 3mm
\begin{subfigure}{\textwidth}
\centering
\captionsetup{labelformat=empty}
\begin{minipage}{0.22\linewidth}
\includegraphics[width=\linewidth]{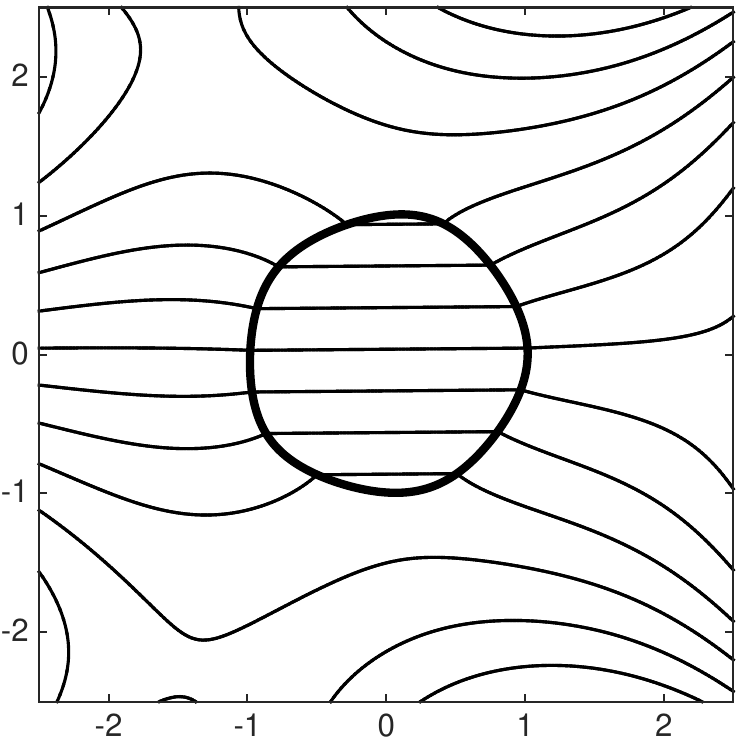}\caption{$s=\frac{1}{8}$}
\end{minipage}\qquad
\begin{minipage}{0.22\linewidth}
\includegraphics[width=\linewidth]{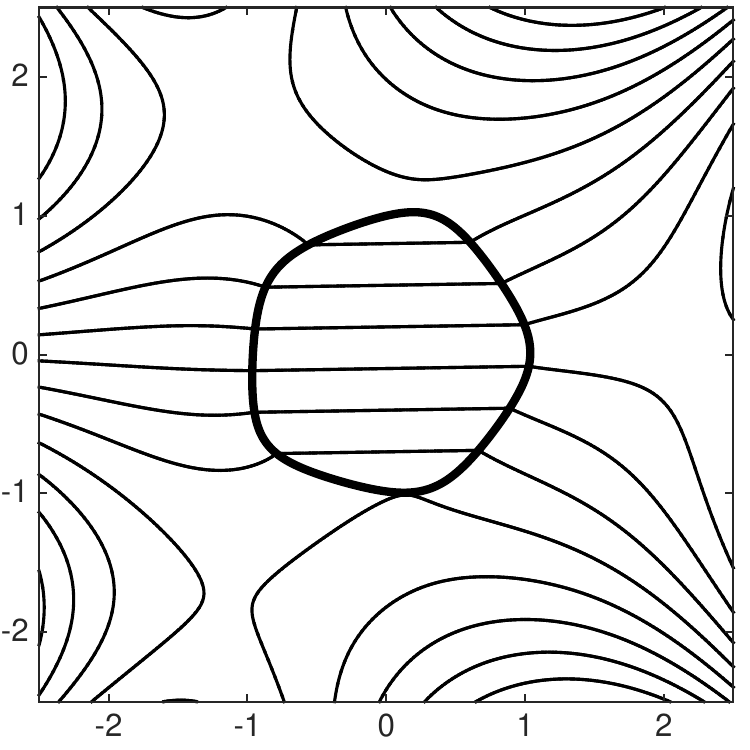}\caption{$s=\frac{1}{4}$}
\end{minipage}\qquad
\begin{minipage}{0.22\linewidth}
\includegraphics[width=\linewidth]{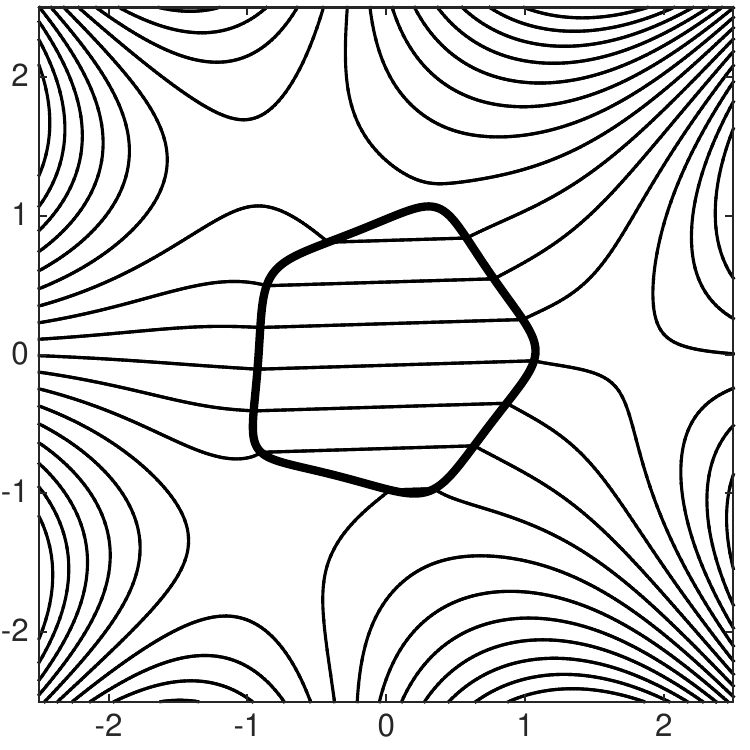}\caption{$s=\frac{1}{2}$}
\end{minipage}
\end{subfigure}
\caption{Isotropic case ($\sigma=0.2$). 
In the first row, $\Om$ is a domain of order $N=1,2,3$ given by \eqnref{Psi:finite:N}, where $\gamma=1$, $a_1=0.1+0.1\iu$, $a_2=0.1+0.1\iu$, and $a_3=-0.1\iu$. 
In the second row, $\Om$ is given as in Figure~\ref{fig:aniso} with variant $s$; as $\Om$ more resembles a disk with smaller $s$, the corresponding $H$ has smaller higher-order coefficients.
}
\label{fig:iso2}
\end{figure}

\section{Preliminary}\label{sec:Pre}

\subsection{Boundary integral formulation for the conductivity transmission problem}\label{sec:Boundary}

Let $\Omega$ be a Lipschitz domain in $\RR^2$. 
For $\varphi\in L^2(\p \Omega)$, the single- and double-layer potentials are defined as
\begin{align}\notag
\Scal_{\p \Omega}[\varphi](x)&=\int_{\partial \Omega}\Gamma(x-y)\varphi(y)\,d\sigma(y),\quad x\in\Rtwo,\\[1mm]
\Dcal_{\p \Omega}[\varphi](x)&=\int_{\partial \Omega}\frac{\partial}{\partial\nu_y}\Gamma(x-y)\varphi(y)\,d\sigma(y),\quad x\in\Rtwo\setminus\partial \Omega,\label{def:doublelayer}
\end{align}
where $\Gamma$ is the fundamental solution to the Laplacian, {i.e.}, $\Gamma(x)=\frac{1}{2\pi}\ln|x|$, and $\nu_y$ denotes the outward unit normal vector to $\partial \Omega$. 
The single-layer potential has the jump relation
\begin{align}\label{eqn:jump}
	 \ds\frac{\partial}{\partial\nu}\Scal_{\p \Omega}[\varphi]\Big|^{\pm}&=\left(\pm\frac{1}{2}I+\Kcal^*_{\p \Omega}\right)[\varphi]\quad\mbox{on }\p \Omega,
	 \end{align}
where $I$ denotes the identity operator on $L^2(\p \Omega)$ and $\Kcal_{\p \Omega}^*$ is the so-called Neumann--Poincar\'{e} (NP) operator. The NP operator is the boundary integral operator defined as
\begin{align*}
	 \ds\Kcal^*_{\p \Omega}[\varphi](x)&=\text{p.v.}\,\frac{1}{2\pi}\int_{\partial \Omega}\frac{\left<x-y,\nu_x\right>}{|x-y|^2}\varphi(y)\,d\sigma(y),
\end{align*}
where p.v. stands for the Cauchy principal value.

We set $\Scal_{\p \Omega}[\varphi](z):=\Scal_{\p \Omega}[\varphi](x)$ for $z=x_1+\iu x_2$. Likewise, we define $\Dcal_{\p \Omega}[\varphi](z)$ and $\Kcal^*_{\p \Omega}[\varphi](z)$.

\smallskip

\noindent\textbf{Isotropic case.}
When $\bm{\sigma}=\sigma\bm{I}$, $0<\sigma\neq1<\infty$, the solution $u$ to \eqnref{cond_eqn0} can be expressed as
\beq\label{u:integral}
u(x)=H(x)+\Scal_{\p\Om}[\varphi](x)\quad\mbox{in }\RR^2,
\eeq
where $\varphi\in L^2_0(\p\Om)$ (i.e., $\varphi$ is square-integrable and has a zero average value) is given by
\beq\label{varphi:eqn}
\varphi=(\lambda I-\Kcal_{\p\Om}^*)^{-1}\left[\nu\cdot \nabla H\right]\quad\mbox{with }\lambda = \frac{\sigma+1}{2(\sigma-1)}.
\eeq
The operator $\Kcal^*_{\p \Omega}$ is bounded on $L^2(\p \Omega)$, and $\lambda I-
\Kcal_{\p\Om}^*$ is invertible on $L^2_0(\p \Om)$ for $|\lambda|\geq1/2$ \cite{Escauriaza:1992:RTW,Kellogg:1953:FPT,Verchota:1984:LPR}.
For more properties of the NP operator, we refer the readers to \cite{Ammari:2013:MSM:book, Ammari:2004:RSI:book, Kenig:1994:HAT} and the references therein.

\smallskip

\noindent\textbf{Anisotropic case.}
We now assume that $\bm{\sigma}$ is anisotropic. In other words, $\bm{\sigma}$ is a $2\times2$ positive definite symmetric matrix satisfying that $\bm{I}-\bm{\sigma}$ is either positive or negative definite.
For $\psi\in L^2(\p\Om)$, we define the single-layer potential associated with $\bm{\sigma}$ as 
$$\Scal^{\bm{\sigma}}_{\p\Om}[\psi](x)=\int_{\p\Om} \Gamma^{\bm{\sigma}}(x-y) \psi(y) \,d\sigma(y)\quad\mbox{in }\RR^2,$$
where 
$$ \Gamma^{\bm{\sigma}}(x)=\frac{1}{2\pi \sqrt{\mbox{det} (\bm{\sigma})}} \ln\left|\bm{\sigma}^{-\frac{1}{2}}x \right|$$
 and $\bm{\sigma}^{-\frac{1}{2}}$ is the inverse of the square root matrix of $\bm{\sigma}$.
The solution $u$ to \eqnref{cond_eqn0} with the anisotropic conductivity $\bm{\sigma}$ can be expressed as (see \cite{Escauriaza:1993:RPS})
\beq\label{sol:aniso}
u(x) =
\begin{cases}
\ds H(x)+\Scal_{\p \Om}[\varphi](x)\quad&\mbox{in }\RR^2\setminus\overline{\Om},\\
\ds \Scal^{\bm{\sigma}}_{\p\Om}[\psi](x)\quad&\mbox{in }\Om,
\end{cases}
\end{equation}
where the density functions $(\psi,\varphi) \in L^2(\p \Om) \times L^2_0(\p \Om)$ satisfy
\begin{equation*}
\begin{cases}
\ds\Scal^{\bm{\sigma}}_{\p\Om}[\psi] - \Scal_{\p \Om}[\varphi]= H, \\
\ds\nu \cdot \bm{\sigma} \nabla \Scal^{\bm{\sigma}}_{\p\Om}[\psi]\Big|^{-}- \nu \cdot \nabla \Scal_{\p\Om} [\varphi]\Big|^{+} = \nu \cdot \nabla H\quad\mbox{on }\p\Om.
\end{cases}
\end{equation*}

\subsection{Geometric series expansions of the layer potential operators}\label{sec:series_solution}

The exterior conformal mapping $\Psi$ associated with $\Om$ (see \eqnref{conformal:Psi}) uniquely defines the so-called {\it Faber polynomials} $F_m(z)$ via the generating relation
\beq\label{faber:generating}
\frac{w\Psi'(w)}{\Psi(w)-z}=\sum_{m=0}^\infty \frac{F_m(z)}{w^{m}}\quad \mbox{for }z\in{\Om},\ |w|>\gamma.
\eeq
Then, $F_m(\Psi(w))$ has only one positive term $w^m$, i.e.,
\begin{equation}\label{eqn:Faberdefinition}
	F_m(\Psi(w))
	=w^m+\sum_{k=1}^{\infty}c_{mk}{w^{-k}},
\end{equation}
where $c_{mk}$ are called the {\it Grunsky coefficients}. 
The symmetric relation $$kc_{mk}=mc_{km}$$
 holds for all $m,k\in\NN$.
The concept of Faber polynomials was first introduced by G. Faber \cite{Faber:1903:PE} and has been one of the essential elements in geometric function theory (see  \cite{Duren:1983:UF}).

Each $F_m$ is an $m$-th order monic polynomial uniquely determined by the coefficients $a_0,a_1,\cdots,a_{m-1}$ of $\Psi$; a recursive formula to compute the coefficients of Faber polynomials is well known \cite[Ch.4]{Duren:1983:UF}. The first three Faber polynomials are
\beq\notag
F_0(z)=1,\quad F_1(z)=z-a_0,\quad F_2(z)=z^2-2a_0 z+(a_0^2-2a_1).
\eeq
From the fact that $F_1(z)=z-a_0$ and the symmetricity, we have 
$c_{1k}=a_k$ and
\beq\label{C_k1:a}
c_{k1}=kc_{1k}=ka_k\quad\mbox{for all }k\in\NN.
\eeq
Because $F_m(z)$ is a monic polynomial of order $m$ (the highest order term is $z^m$), we obtain the following lemma by substituting \eqnref{Psi:finite:N} into \eqnref{eqn:Faberdefinition}.
\begin{lemma}\label{lemma:finite}
Let $\Om$ be a domain of order $N\geq2$. Then,
 $c_{m,Nm}=(a_N)^m\neq 0$
 and $c_{mk}=c_{km}=0$ for all $m\in\NN$, $k\geq Nm+1$.
\end{lemma}

We recall that $\Om$ is assumed to be $C^{1,\alpha}$. The continuous extension of the conformal mapping to the boundary is well known \cite{Caratheodory:1913:GBR}. 
Furthermore, by the Kellogg-Warschawski theorem \cite{Pommerenke:1992:BBC:book}, $\Psi'$ also has continuous extension to the boundary.
We define a curvilinear orthogonal coordinates $(\rho,\theta)\in[\rho_0,\infty)\times[0,2\pi)$ via the relation
\beq\label{coord:Psi}
z=\Psi(e^{\rho+\iu\theta})\quad\mbox{for } z\in\CC\setminus\Om.
\eeq
For simplicity, we set $v(\rho,\theta)=v\left(\Psi(e^{\rho+\iu\theta})\right)$ for a complex function $v$.
  The scale factors with respect to $\rho$ and $\theta$ coincide with each other. We denote them by
	\begin{equation}\notag
	h(\rho, \theta) := \left|\frac{\partial \Psi} {\partial \rho}\right| = \left|\frac{\partial \Psi} {\partial \theta}\right|.
	\end{equation}	
On $\p\Om$, it then holds that
$$d\sigma(z)=h(\rho_0,\theta)d\theta$$ and
 \beq
 	\frac{\partial v}{\partial \nu}\Big|_{\p\Om}^{+}(z)=\frac{1}{h(\rho, \theta)}\frac{\partial }{\partial \rho}v(\rho,\theta)\Big|_{\rho\rightarrow\rho_0^+}.\label{eqn:normalderiv}
 \eeq
We denote $\la\cdot,\cdot\ra$ as the inner product in $L^2(\p\Om,h)$, the weighted $L^2$ space with the weight function $h$. In other words, for functions $p,q$ on $\p\Om$ satisfying $\int_{\p\Om}|p|^2 hd\sigma,\;  \int_{\p\Om}|q|^2 hd\sigma <\infty$, we define
\begin{align}\label{eqn:inner_product}
\la p,q\ra&=\frac{1}{2\pi}\int_{\p\Om}p(z)\overline{q(z)}h(z)d\sigma(z).
\end{align}
For each $m\in\ZZ$, we define the density function
$$\psi_m(z):=\frac{e^{\iu m\theta}}{h(\rho_0,\theta)}\quad\mbox{on }\p\Om.$$
They are orthogonal with respect to the inner product \eqnref{eqn:inner_product}, i.e., 
$\la \psi_m,\psi_n\ra=\delta_{mn}$. 

We can express the layer potential operators of the density function $\psi_{m}$ in terms of the Faber polynomials and the Grunsky coefficients as follows.

\begin{lemma}[\cite{Jung:2018:SSM}]\label{thm:series}
Let $\Om$ be a simply connected, bounded planar domain with $C^{1,\alpha}$ boundary for some $\alpha\in(0,1)$.
We identify $z=\Psi(w)\in\CC\setminus{\Om}$ with $(\rho,\theta)$ via the relation \eqnref{coord:Psi}.
\begin{itemize}
\item[\rm(a)]
We have
\beq\label{Scal_zeta0}
\Scal_{\p\Om}[\psi_0](z)=
\begin{cases}
\ln \gamma \quad &\mbox{for }z\in\overline{\Om},\\
\ln|w|\quad&\mbox{for }z\in\CC\setminus\overline{\Om}.
\end{cases}
\eeq

For $m=1,2,\dots$, we have
\begin{align}\label{eqn:seriesSLpositive}
\SingleOmega[\psi_m](z)&=
\begin{cases}
\ds-\frac{1}{2m\gamma^m}F_m(z)\quad&\text{for }z\in\overline{\Omega},\\[2mm]
\ds-\frac{1}{2m\gamma^m}\left(\sum_{k=1}^{\infty}c_{mk}e^{-k(\rho+\iu\theta)}+\gamma^{2m}e^{m(-\rho+\iu\theta)}\right)\quad &\text{for } z\in\CC\setminus\overline{\Om},
\end{cases}\\[3mm]
\label{eqn:seriesSLnegative}
\SingleOmega[\psi_{-m}](z)&=
\begin{cases}
\ds	-\frac{1}{2m\gamma^{m}}\overline{F_{m}(z)}\quad&\text{for }z\in\overline{\Omega},\\[2mm]
\ds-\frac{1}{2m\gamma^m}\left(\sum_{k=1}^{\infty}\overline{c_{mk}}e^{-k(\rho-\iu\theta)}+\gamma^{2m}e^{m(-\rho-\iu\theta)}\right)\quad &\text{for } z\in\CC\setminus\overline{\Om}.		
\end{cases}
\end{align}
The two series converge uniformly on $\{(\rho,\theta):\rho\geq\rho_1\}$ for any fixed $\rho_1>\ln\gamma$.

\item[\rm(b)]
We also have
$$\Kcal^*_{\p\Om}[\psi_0]=\dfrac{1}{2}\psi_0, \quad  \KstarOmega[{\psi_m}]=\frac{1}{2}\sum_{k=1}^{\infty}\frac{k}{m}\frac{c_{mk}}{\gamma^{m+k}}\, {\psi}_{-k}, \quad  \KstarOmega[{\psi_{-m}}]=\frac{1}{2}\sum_{k=1}^{\infty}\frac{k}{m}\frac{\overline{c_{mk}}}{\gamma^{m+k}}\, {\psi}_{k},
$$
where the infinite series converges in the Sobolev space $H^{-1/2}(\p\Om)$.
\end{itemize}
\end{lemma}

The following lemma is essential for characterizing a domain of finite order.
\begin{lemma}\label{psi_1_poly}
For any $N\geq2$, $\Om$ is a domain of order $N$ if and only if $\KstarOmega[\psi_1]\in L^2(\p\Om,h)$ and 
\begin{align*}
\left\la \KstarOmega[\psi_1],\psi_{- N}\right\ra \neq 0, \quad \left\la \KstarOmega[\psi_1],\psi_{- k}\right\ra =0\quad\mbox{for all }k>N.
\end{align*}
Moreover, $\Om$ is a domain of order $1$ (i.e., a disk or ellipse) if and only if $\KstarOmega[\psi_1]\in L^2(\p\Om,h)$ and
$$
\left\la \KstarOmega[\psi_1],\psi_{- k}\right\ra =0\quad\mbox{for all }k>1.
$$
\end{lemma}
\begin{proof}
Note that $\Om$ is a domain of finite order $N$ if and only if $a_k=0$ for all $k\geq N+1$
and $a_N\neq 0$ (except $N=1$). 
Recall that $c_{1k}=a_k$.
From the orthogonality $\la \psi_m,\psi_n\ra=\delta_{mn}$, we have
\begin{align}\label{NP_series1}
\left\la \KstarOmega[\psi_1],\psi_{-k}\right\ra=\frac{k}{2}\frac{c_{1k}}{\gamma^{1+k}}=\frac{k}{2}\frac{a_k}{\gamma^{1+k}}.
\end{align}
 Hence, we complete the proof.
\end{proof}


\section{Density relations}\label{subsection:mainproof}
In this section, we assume that $H$ and $u|_{\overline{\Om}}$ are harmonic polynomials of degree $L$ and $K$, respectively. For the anisotropic case, only $K=0, 1$ is considered.

As shown in \eqnref{H:expan}, it holds that for some complex coefficients $\alpha_m$ and $\beta_m$,
\begin{align*}
H(x)
&=\frac{1}{2}\sum_{m=0}^{L}\left(\alpha_m F_m(z)+\overline{\alpha_m F_m(z)}\right)\quad\mbox{in }\CC,\\
u(x)&=\frac{1}{2}\sum_{m=0}^{K}\left(\beta_m F_m(z)+\overline{\beta_m F_m(z)}\right)\quad\mbox{on }\overline{\Om}.
\end{align*}
We note from \eqnref{eqn:seriesSLpositive} and \eqnref{eqn:seriesSLnegative} that
\begin{align*}
&\Scal_{\p\Om}\left[-m\gamma^m \psi_m\right](z)=\frac{1}{2}F_m(z),\\
&\Scal_{\p\Om}\left[-m\gamma^m \psi_{-m}\right](z)=\frac{1}{2}\overline{F_m(z)} \quad\mbox{on }\overline{\Om}. 
\end{align*}
Hence, $H$ and $u$ satisfy
\begin{align}
H(x)\label{eqn:H:single}
&=\frac{1}{2}(\alpha_0+\overline{\alpha_0})+\Scal_{\p\Om}[-\psi_H](x),\\
u(x)\label{eqn:u:single}
&=\frac{1}{2}(\beta_0+\overline{\beta_0})+\Scal_{\p\Om}[-\psi_u](x)\quad\mbox{on }\overline{\Om}
\end{align}
with 
\begin{align}\label{psi:H:expan}
{\psi_H}&=\sum_{m=1}^{L} \left(\alpha_m m \gamma^m\psi_m+\overline{\alpha_m}m\gamma^m\psi_{-m}\right),\\
\label{psi:u:expan}
\psi_u&=\sum_{m=1}^{K} \left(\beta_m m \gamma^m \psi_m+\overline{\beta_m} m \gamma^m\psi_{-m}\right).
\end{align}
It then follows from the jump formula of the single-layer potential \eqnref{eqn:jump} that
\begin{align}\label{H:deri:psi}
&\pd{H}{\nu}=\left(\frac{1}{2}I-\Kcal^*_{\p\Om}\right)[\psi_H],\\
&\pd{u}{\nu}\Big|^-=\left(\frac{1}{2}I-\Kcal^*_{\p\Om}\right)[\psi_u]\quad\mbox{on }\p\Om.
\end{align}

The following relation is useful in deriving the relations between $\psi_H$ and $\psi_u$:
\beq\label{nu_complex}
\frac{1}{2}\left(\nu_1+\iu\nu_2\right)=\left(\frac{1}{2}I-\Kcal^*_{\p\Om}\right)[\gamma\psi_1].
\eeq
Indeed, from \eqnref{eqn:seriesSLpositive} with $m=1$, it holds that
$\Scal_{\p\Om}[\psi_1](z)=-\frac{1}{2\gamma}F_1(z)=-\frac{1}{2\gamma}(x_1+\iu x_2 -a_0).$
By taking the interior normal derivative, we have
{
$$
\left(-\frac{1}{2}I+\Kcal^*_{\p\Om}\right)[\psi_1]
=\frac{\partial}{\partial\nu}\Scal_{\p \Om}[\psi_1]\Big|^{-}
=-\frac{1}{2\gamma}(1,\iu)\cdot\nu=-\frac{1}{2\gamma}(\nu_1+\iu\nu_2).
$$}
This implies \eqnref{nu_complex}.

\begin{lemma}\label{lemma:main_method}
Assume that $H$ is a harmonic polynomial of degree $L$ and that $u$ is a harmonic polynomial of degree $K$ inside $\Om$ ($K=0,1$ for the anisotropic case). We set $\psi_H$ and $\psi_u$ as in \eqnref{psi:H:expan} and \eqnref{psi:u:expan}, respectively.
We set $\varphi$ to be the density function on $\p\Om$ satisfying \eqnref{u:integral} for the isotropic case or \eqnref{sol:aniso} for the anisotropic case. Then, we have
\beq\label{eqn:varphi:psi:tpsi}
\varphi=\psi_H-\psi_u.
\eeq
Furthermore, the followings hold:
\begin{itemize}
\item[\rm(a)]
For the isotropic case, we have 
\beq\label{eqn:main_relation}
\Kcal^*_{\p\Om}[\psi_u]=\lambda\psi_u-\Big(\lambda-\frac{1}{2}\Big)\psi_H.
\eeq

\item[\rm(b)]
For the anisotropic case, assuming $\nabla u=(e_1,e_2)$ in $\Om$ for some real constants $e_1,e_2$, we have  
\begin{align}
\Kcal^*_{\p\Om}\left[\, \overline{(e-f)}\, \gamma\psi_1+(e-f)\gamma\psi_{-1}\right]\label{eqn:aniso:relation:ulinear}
=-\frac{1}{2}\, \overline{(e+f)}\, \gamma\psi_1 -\frac{1}{2}(e+f)\gamma\psi_{-1}+\psi_H
\end{align}
with $e=e_1+\iu e_2$, $f=f_1+\iu f_2$ and $(f_1, f_2)=\bm{\sigma} (e_1, e_2)$ in $\Om$. 
\end{itemize}
\end{lemma}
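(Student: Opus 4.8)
The plan is to read off the density relation $\varphi = \psi - \widetilde\psi$ directly from the boundary integral representations, and then to feed this into the second transmission condition (the Neumann jump) in each of the two cases. For the identity \eqnref{eqn:varphi:psi:tpsi}: in the isotropic case, $u = H + \Scal_{\p\Om}[\varphi]$ by \eqnref{u:integral}, while on $\overline\Om$ we have $H = \Scal_{\p\Om}[-\psi] + \mathrm{const}$ from \eqnref{eqn:H:single} and $u = \Scal_{\p\Om}[-\widetilde\psi] + \mathrm{const}$ from \eqnref{eqn:u:single}. Subtracting, $\Scal_{\p\Om}[\varphi] = \Scal_{\p\Om}[\psi - \widetilde\psi]$ modulo a constant on $\p\Om$; since $\varphi, \psi - \widetilde\psi \in L^2_0(\p\Om)$ (the $\psi_m$ with $m\neq 0$ have zero mean because $\la\psi_m,\psi_0\ra=0$) and $\Scal_{\p\Om}$ is injective on $L^2_0(\p\Om)$ for a domain with this regularity, we conclude $\varphi = \psi - \widetilde\psi$. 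In the anisotropic case the same argument applies using the first line of \eqnref{trans:aniso} together with \eqnref{eqn:u:single} and \eqnref{eqn:H:single}, after noting that $\widetilde\psi$ here is the density such that $\wtScal[\widetilde\varphi] = \Scal_{\p\Om}[-\widetilde\psi] + \mathrm{const}$ on $\overline\Om$; I will need to check that the interior solution $\wtScal[\widetilde\varphi]$, being harmonic of degree $\widetilde M \le 1$ in $\Om$, indeed has such a representation, which is exactly the content of \eqnref{eqn:u:single} with $\widetilde M \le 1$.

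For part (a): start from the Neumann jump. On one hand $\pd{H}{\nu} = (\frac12 I - \Kcal^*_{\p\Om})[\psi]$ by \eqnref{H:deri:psi}; on the other, the exterior normal derivative of $\Scal_{\p\Om}[\varphi]$ equals $(\frac12 I + \Kcal^*_{\p\Om})[\varphi]$ and the interior normal derivative of $u$ (as a harmonic polynomial) equals $(\frac12 I - \Kcal^*_{\p\Om})[\widetilde\psi]$. Plugging $u = H + \Scal_{\p\Om}[\varphi]$ into the transmission condition $\nu\cdot\nabla u|^+ = \nu\cdot\sigma\nabla u|^-$ with $\sigma$ scalar, and using $\varphi = (\lambda I - \Kcal^*_{\p\Om})^{-1}[\nu\cdot\nabla H]$ from \eqnref{varphi:eqn}, i.e. $(\lambda I - \Kcal^*_{\p\Om})[\varphi] = \pd{H}{\nu} = (\frac12 I - \Kcal^*_{\p\Om})[\psi]$, I substitute $\varphi = \psi - \widetilde\psi$ to get $(\lambda I - \Kcal^*_{\p\Om})[\psi - \widetilde\psi] = (\frac12 I - \Kcal^*_{\p\Om})[\psi]$, which rearranges to $\Kcal^*_{\p\Om}[\widetilde\psi] = \lambda\widetilde\psi - (\lambda - \frac12)\psi$; this is \eqnref{eqn:main_relation}. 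The algebra is elementary once the jump formulas are in place.

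For part (b): here $\nabla u = (e_1,e_2)$ is constant, so $u(x) = e_1 x_1 + e_2 x_2 + \mathrm{const} = \RRe\{\overline{e}\,z\} + \mathrm{const}$ in $\Om$ with $e = e_1 + \iu e_2$; matching with \eqnref{eqn:u:single}, which uses $F_1(z) = z - a_0$, gives $\widetilde\psi = \overline e\,\gamma\psi_1 + e\,\gamma\psi_{-1}$ (the coefficient $\talpha_1$ satisfies $\talpha_1 \cdot \gamma = \overline e$, up to the constant absorbed in $\talpha_0$). Next I compute the two sides of the second transmission condition in \eqnref{trans:aniso}. The key point is the relation \eqnref{nu_complex}, $\frac12(\nu_1 + \iu\nu_2) = (\frac12 I - \Kcal^*_{\p\Om})[\gamma\psi_1]$, which lets me express $\nu\cdot A\nabla\wtScal[\widetilde\varphi]|^- = \nu\cdot(f_1,f_2)$ (since $A\nabla u = (f_1,f_2)$ is also constant) in terms of $\psi_{\pm 1}$: namely $\nu\cdot(f_1,f_2) = \RRe\{\overline f(\nu_1+\iu\nu_2)\} = (\frac12 I - \Kcal^*_{\p\Om})[\overline f\,\gamma\psi_1 + f\,\gamma\psi_{-1}]$. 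Meanwhile $\nu\cdot\nabla\Scal_{\p\Om}[\varphi]|^+ = (\frac12 I + \Kcal^*_{\p\Om})[\varphi] = (\frac12 I + \Kcal^*_{\p\Om})[\psi - \widetilde\psi]$ and $\pd H\nu = (\frac12 I - \Kcal^*_{\p\Om})[\psi]$. Substituting these into $\nu\cdot A\nabla\wtScal[\widetilde\varphi]|^- - \nu\cdot\nabla\Scal_{\p\Om}[\varphi]|^+ = \pd H\nu$ and cancelling the $\psi$ terms that appear on both sides, I obtain an identity involving $\Kcal^*_{\p\Om}$ applied to $\overline f\,\gamma\psi_1 + f\,\gamma\psi_{-1}$, $\overline e\,\gamma\psi_1 + e\,\gamma\psi_{-1}$ ($=\widetilde\psi$), and $\psi$; collecting the combinations $e - f$ inside $\Kcal^*_{\p\Om}$ and $e + f$ outside yields \eqnref{eqn:aniso:relation:ulinear}.

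The main obstacle I anticipate is bookkeeping in part (b): correctly tracking the constants (the $\talpha_0$, $\alpha_0$ terms and the additive constant in $u$, which drop out of $\nabla u$ but must be handled when identifying densities), making sure the decomposition $\widetilde\psi = \overline e\,\gamma\psi_1 + e\,\gamma\psi_{-1}$ is exactly right, and carefully applying \eqnref{nu_complex} with the correct conjugation so that the real-part structure $\nu\cdot(f_1,f_2) = \RRe\{\overline f(\nu_1+\iu\nu_2)\}$ matches the operator identity. The isotropic part (a) and the density identity \eqnref{eqn:varphi:psi:tpsi} should be routine given the jump relations and injectivity of $\Scal_{\p\Om}$ on $L^2_0(\p\Om)$.
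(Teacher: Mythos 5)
Your proposal is correct and follows essentially the same route as the paper: the density identity comes from comparing the single-layer representations of $u-H$ on $\p\Om$, part (a) from inserting $\varphi=\psi-\widetilde{\psi}$ into $(\lambda I-\Kcal^*_{\p\Om})[\varphi]=\pd{H}{\nu}=\left(\frac{1}{2}I-\Kcal^*_{\p\Om}\right)[\psi]$, and part (b) from rewriting $\nu\cdot\bm{\sigma}\nabla u\big|^-$ via \eqnref{nu_complex} and the flux transmission condition, exactly as in the paper. The only step to tighten is the identification $\varphi=\psi-\widetilde{\psi}$: the boundary comparison only gives $\Scal_{\p\Om}\big[\varphi-(\psi-\widetilde{\psi})\big]=\mathrm{const.}$ on $\p\Om$, so instead of invoking bare injectivity of $\Scal_{\p\Om}$ you should argue (as the paper does) that this potential is then constant in $\Om$ by harmonicity, take the interior normal derivative, and use the invertibility of $-\frac{1}{2}I+\Kcal^*_{\p\Om}$ on $L^2_0(\p\Om)$, which is the precise form of ``injectivity modulo constants'' your argument needs.
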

\begin{proof}
Because $\Scal_{\p\Om}[\varphi](x)=u(x)-H(x)$ for $x\in\RR^2\setminus\overline{\Om}$ (even when ${\bm{\sigma}}$ is anisotropic), we have from \eqnref{eqn:H:single} and \eqnref{eqn:u:single} that
$$\Scal_{\p\Om}[\varphi]=\Scal_{\p\Om}[-{\psi_u}]-\Scal_{\p\Om}[-\psi_H]+\mbox{const.}\quad\mbox{in }\Om.$$
Indeed, the equality holds for $x\in\p\Om$. Since both sides are harmonic in $\Om$ and continuous on $\overline{\Om}$, the equality also holds in $\Om$ from the uniqueness in the boundary value problem of harmonic functions.
By taking the interior normal derivative, we obtain
$$\left(-\frac{1}{2}I+\Kcal^*_{\p\Om}\right)[\varphi]=\left(-\frac{1}{2}I+\Kcal^*_{\p\Om}\right)[\psi_H-{\psi_u}]\quad\mbox{on }\p\Om.$$
Because $-\frac{1}{2}I+\Kcal^*_{\p\Om}$ is invertible on $L^2_0(\p\Om)$ and $\varphi, \psi_H-{\psi_u}$ are in $L^2_0(\p\Om)$, we deduce \eqnref{eqn:varphi:psi:tpsi}.

\smallskip

If $\bm{\sigma}$ is isotropic, the relations \eqnref{varphi:eqn} and \eqnref{H:deri:psi} imply that
$$(\lambda I-\Kcal^*_{\p\Om})[\varphi]=\pd{H}{\nu}=\left(\frac{1}{2}I-\Kcal^*_{\p\Om}\right)[\psi_H].$$
From \eqnref{eqn:varphi:psi:tpsi}, it is straightforward to obtain \eqnref{eqn:main_relation}.

\smallskip

Now, let ${\bm{\sigma}}$ be anisotropic
and $\nabla u=(e_1,e_2)$ in $\Om$. Then, we have
$$u(x_1,x_2)= \frac{1}{2}\overline{e}z+\frac{1}{2} e \overline{z}  + \mbox{constant}$$
and
\beq\label{eqn:tpsi}
{\psi_u}=\overline{e}\gamma\psi_1+e\gamma\psi_{-1}.
\eeq
The definition of $(f_1,f_2)$ and \eqnref{nu_complex} imply that
\begin{align}\notag
\nu\cdot{\bm{\sigma}}\nabla u\big|^-
&=\nu_1f_1+\nu_2f_2\\\notag
&=\frac{1}{2}(\nu_1+\iu\nu_2)\overline{f}+\frac{1}{2}(\nu_1-\iu\nu_2)f\\\label{eqn:nusigmagradu}
&=\left(\frac{1}{2}I-\Kcal^*_{\p\Om}\right)\left[\,  \overline{f}\gamma\psi_1+f\gamma\psi_{-1}\right].
\end{align}
Meanwhile, we have from \eqnref{sol:aniso}, \eqnref{H:deri:psi}, \eqnref{eqn:varphi:psi:tpsi} and the transmission condition of $u$ on $\p\Om$ that
\begin{align*}\nu\cdot{\bm{\sigma}}\nabla u\big|^-
=\nu\cdot\nabla u\big|^+
=\pd{H}{\nu}+ \pd{}{\nu}\Scal_{\p\Om}[\varphi]\Big|^+
=-\Kcal^*_{\p\Om}[{\psi_u}]+\psi_H-\frac{1}{2}{\psi_u}.
\end{align*}
Using \eqnref{eqn:tpsi} and \eqnref{eqn:nusigmagradu}, we deduce
\beq\notag
\left(\frac{1}{2}I-\Kcal^*_{\p\Om}\right)\left[\overline{f}\gamma\psi_1+f\gamma\psi_{-1}\right]
+\left(\frac{1}{2}I+\Kcal^*_{\p\Om}\right)\left[\overline{e}\gamma\psi_1+e\gamma\psi_{-1}\right]=\psi_H,
\eeq
and this implies \eqnref{eqn:aniso:relation:ulinear}.
\end{proof}

 Ru and Schiavone proved the Eshelby uniformity conjecture for the anti-plane elasticity \cite{Ru:1996:EIA}. 
We provide an alternative proof by using Lemmas~\ref{psi_1_poly} and \ref{lemma:main_method} as follows.
\begin{cor}[The Eshelby Conjecture]\label{cor:eshelby}
For any $\bm{\sigma}$, either isotropic or anisotropic, $\Om$ is an ellipse if and only if the solution $u$ to \eqnref{cond_eqn0} has a uniform strain in $\Om$ for a uniform loading $H$.
\end{cor}
\begin{proof}
We only prove that $\Om$ is an ellipse if $u$ has a uniform strain in $\Om$ for a uniform loading $H$.
From the assumption, we have $$K=0\mbox{ or }1, \quad\mbox{and}\quad L=1,$$
and it follows from Lemma~\ref{lemma:main_method}(b) that
\beq\label{eqn:eshelby:proof:relation}
\Kcal^*_{\p\Om}\left[\, \overline{(e-f)}\, \gamma\psi_1+(e-f)\gamma\psi_{-1}\right]=-\frac{1}{2}\, \overline{(e+f)}\, \gamma\psi_1 -\frac{1}{2}(e+f)\gamma\psi_{-1}+\psi_H,
\eeq
where $e=e_1+\iu e_2$, $f=f_1+\iu f_2$ and $(f_1,f_2)=\bm{\sigma} \nabla u=\bm{\sigma} (e_1,e_2)$ in $\Om$.

We have $e\neq0$ and, i.e., $$K\neq0.$$ Indeed, if $e=0$, then $f=0$ from the definition of $f$. This implies that $\psi_H=0$ from \eqnref{eqn:eshelby:proof:relation}. This contradicts the assumption that $L=1$ (which implies that $\psi_H\neq0$). 

From the invertibility of $\bm{I}-\bm{\sigma}$, we then deduce $e\neq f$.
Note that the right-hand side of \eqnref{eqn:eshelby:proof:relation} belongs to the linear space spanned by $\{\psi_{1},\psi_{-1}\}$. By taking the inner product with $\psi_{-k}$ (see \eqnref{eqn:inner_product}) for both sides of \eqnref{eqn:eshelby:proof:relation}, we observe that
$$\overline{(e-f)}\, \gamma\left\la \KstarOmega[\psi_1],\psi_{-k}\right\ra=0\quad\mbox{for all }k\geq2. $$
Thanks to Lemma~\ref{psi_1_poly}, we prove that $\Om$ is an ellipse.
\end{proof}

\section{Proof of the main results}\label{sec:proof}

\subsection{Proof of Theorem~\ref{theorem:main:a} and Theorem~\ref{theorem:main:bc}} \label{sec:proof:Thm21}
We start with the partial proof of Theorem~\ref{theorem:main:a} for the isotropic case. 
\begin{lemma}[Isotropic case] Let $\bm{\sigma}=\sigma\bm{I}$ with $0<\sigma \neq 1<\infty$, and set $\lambda=\frac{\sigma+1}{2(\sigma-1)}$. For the statements {\rm(a)} and {\rm(b)} in Theorem~\ref{theorem:main:a}, we have {\rm(b)} implies {\rm (a)}.
\end{lemma}
\begin{proof} We assume (b): $H$ is a harmonic polynomial of degree $N$ and $u$ has a uniform strain inside $\Om$, i.e., 
 $$K=0\mbox{ or }1, \quad\mbox{and}\quad L=N,$$
 following the terminology in section \ref{subsection:mainproof}.
Then, we have $K\neq0$ by the same analysis as in the proof of Corollary \ref{cor:eshelby}.  From Lemma \ref{lemma:main_method}\,(a), it then follows that
\begin{align}\notag
&\Kcal^*_{\p\Om}\left[\beta_1\gamma\psi_1 +\overline{\beta_1}\gamma\psi_{-1}\right]\\\notag
&=\lambda\left(\beta_1\gamma\psi_{1}+\overline{\beta_1}\gamma\psi_{-1}\right)
-\Big(\lambda-\frac{1}{2}\Big)\sum_{m=1}^N\left(\alpha_m m \gamma^m\psi_m+\overline{\alpha_m} m \gamma^m\psi_{-m}\right)
\end{align}
with some complex coefficients $\beta_1\neq0$ and $\alpha_N\neq0$.
Taking the inner product for both sides with $\psi_{-m}$ and applying Lemma~\ref{thm:series}(b), we arrive that
\beq\label{iso:proof:eqn1}
\beta_1\gamma\left\la \Kcal^*_{\p\Om}[\psi_1],\psi_{-m}\right\ra
=
\begin{cases}
\ds\Big[\lambda\overline{\beta_1}-\Big(\lambda -\frac{1}{2}\Big)\overline{\alpha_1}\Big]\gamma\quad &\mbox{for }m=1,\\
\ds-\Big(\lambda-\frac{1}{2}\Big)\overline{\alpha_m}m\gamma^m\quad&\mbox{for } m=2,\cdots,N,\\
\ds0\quad &\mbox{for }m\geq N+1.
\end{cases}
\eeq
Meanwhile, it also holds from \eqnref{C_k1:a} and \eqnref{NP_series1} that
\beq\label{iso:proof:eqn2}
\beta_1\gamma\left\la \Kcal^*_{\p\Om}[\psi_1],\psi_{-m}\right\ra
=\beta_1\gamma \frac{m}{2}\frac{c_{1m}}{\gamma^{m+1}}=\beta_1 \frac{m}{2} \frac{a_m}{\gamma^m}\quad\mbox{for all } m\in\NN.
\eeq
By comparing \eqnref{iso:proof:eqn1} and \eqnref{iso:proof:eqn2}, we have
$$a_N\neq0,\quad a_m=0\quad \mbox{for all }m\geq N+1.$$
In other words, $\Om$ is a domain of order $N$. Hence, we prove the lemma.

Moreover, equations \eqref{iso:proof:eqn1} and \eqref{iso:proof:eqn2} also imply the algebraic relations:
\begin{align*}
&\frac{a_1}{\gamma^2}\beta_1-2\lambda\overline{\beta_1}=(1-2\lambda)\overline{\alpha_1},\\
&
\frac{a_m}{\gamma^{2m}}\beta_1=(1-2\lambda)\overline{\alpha_m},\quad m=2,\cdots,N,
\end{align*}
which are equivalent to
\begin{align}
\beta_1 &= \frac{(1-2\lambda)}{\left|\dfrac{a_1}{\gamma^2}\right|^2-4\lambda^2}
\left[\frac{\overline{a_1}}{\gamma^2}\overline{\alpha_1}+2\lambda\alpha_1\right]\notag\\
&= \frac{(1-2\lambda)}{\left|\dfrac{a_1}{\gamma^2}\right|^2-4\lambda^2}
\left[\left(\frac{\overline{a_1}}{\gamma^2}+2\lambda\right)\RRe\{\alpha_1\}
+\iu\left(-\frac{\overline{a_1}}{\gamma^2}+2\lambda\right)\IIm\{\alpha_1\}\right]\notag\\\label{eqn:talpha1}
&=(1-2\lambda)\left[\overline{{\tau_1}(\lambda)}\RRe\{\alpha_1\}+\iu\overline{{\tau_2}(\lambda)}\IIm\{\alpha_1\}\right]
\end{align}
and
\begin{align}\notag
\alpha_m &=\frac{1}{1-2\lambda}\frac{\overline{a_m}}{\gamma^{2m}}\overline{\beta_1}\\\label{eqn:talpham}
&=\Big[{{\tau_1}(\lambda)}\RRe\{\alpha_1\}-\iu{{\tau_2}(\lambda)}\IIm\{\alpha_1\}\Big]\frac{\overline{a_m}}{\gamma^{2m}},\quad m=2,\cdots,N,
\end{align}
where ${\tau_1}(\lambda)$ and ${\tau_2}(\lambda)$ are given by \eqnref{def:beta1}.

Set $\nabla u=(e_1,e_2)$ in $\Om$. 
As $u(x)=\frac{1}{2}\beta_1 z+\frac{1}{2}\overline{\beta_1}\overline{z}+\mbox{const.},$
we have $e=e_1+\iu e_2= \overline{\beta_1},$
which implies from \eqnref{eqn:talpha1} that
$$
\begin{bmatrix}
\ds e_1\\
\ds e_2
\end{bmatrix}
=\boldsymbol{\tau}(\lambda)
\begin{bmatrix}
\ds \RRe\{\alpha_1\}\\
\ds -\IIm\{\alpha_1\}
\end{bmatrix}
.$$
By defining $c_1=\RRe\{\alpha_1\}$ and $c_2=-\IIm\{\alpha_1\}$, i.e., 
$$\begin{bmatrix}
\ds c_1\\
\ds c_2
\end{bmatrix}
=\boldsymbol{\tau}(\lambda)^{-1}
\begin{bmatrix}
\ds e_1\\
\ds e_2
\end{bmatrix},
$$
and using the fact that $a_m=0$ for $m\geq N+1$ and $F_1(z)=z-a_0$,
we obtain
\begin{align}
H(x)\notag
&=\frac{1}{2}(\alpha_0+\overline{\alpha_0})+\Scal_{\p\Om}[-\psi_H](x)
=\frac{1}{2}(\alpha_0+\overline{\alpha_0})+\RRe\left\{\alpha_1 F_1(z)+\sum_{m=2}^N \alpha_m F_m(z)\right\}\\\notag
&=\frac{1}{2}(\alpha_0+\overline{\alpha_0})+\RRe\left\{c_1F_1(z)-\iu c_2F_1(z)
+\Big[{{\tau_1}(\lambda)}c_1+\iu{{\tau_2}(\lambda)}c_2\Big]\mathfrak{F}(z)\right\}\\ \label{iso:H:Fexp}
&=c_1\operatorname{Re}\Big\{z+{{\tau_1}(\lambda)}\mathfrak{F}(z)\Big\}
+c_2\operatorname{Im}\Big\{z-{{\tau_2}(\lambda)}\mathfrak{F}(z)\Big\} +\mbox{const.}
\end{align} 
\end{proof}

We now prove Theorem~\ref{theorem:main:a}, where Theorem~\ref{theorem:main:bc} is also proven in the meantime.

\noindent\textbf{Proof of Theorem~\ref{theorem:main:a}.}
We first prove (a) assuming (b): $H$ is a harmonic polynomial of degree $N$ and $u$ has a uniform strain inside $\Om$, i.e., 
 $$K=0\mbox{ or }1, \quad\mbox{and}\quad L=N,$$
 following the terminology in section \ref{subsection:mainproof}.
As in Lemma \ref{lemma:main_method}\,(b), we set $\nabla u=(e_1,e_2)$ and $(f_1,f_2)=\bm{\sigma} \nabla u=\bm{\sigma} (e_1,e_2)$ in $\Om$. We also set $e=e_1+\iu e_2$ and $f=f_1+\iu f_2$.

By the same analysis as in the proof of Corollary \ref{cor:eshelby}, we have $K\neq0$.
As discussed in the proof of Corollary~\ref{cor:eshelby}, we have $e\neq f$.
From \eqnref{eqn:aniso:relation:ulinear} and Lemma~\ref{thm:series}\,(b), it holds that 
$$\left\la \KstarOmega[\psi_1],\psi_{- k}\right\ra =0\quad\mbox{for all }k\geq N+1$$
and $\Om$ is a domain of order $N$ thanks to Lemma~\ref{psi_1_poly}. Hence, we prove (b) implies (a). 

Furthermore, equation~\eqnref{eqn:aniso:relation:ulinear} can be written as
\beq\notag
\Kcal^*_{\p\Om}\Big[\widetilde{\psi}^{\mbox{aniso}}\Big]=\lambda^{\mbox{aniso}}\widetilde{\psi}^{\mbox{aniso}}-\Big(\lambda^{\mbox{aniso}}-\frac{1}{2}\Big)\psi^{\mbox{aniso}}
\eeq
with \begin{align*}
\lambda^{\mbox{aniso}}:=&-\frac{1}{2},\\
\psi^{\mbox{aniso}}:=&\psi_H-\overline{f}\gamma\psi_1-f\gamma\psi_{-1}\\
=&(\alpha_1-\overline{f})\, \gamma\psi_1+(\overline{\alpha_1}-f)\gamma\psi_{-1}
+\sum_{m=2}^N \left(\alpha_m m \gamma^m\psi_m+\overline{\alpha_m}m \gamma^m\psi_{-m}\right)\\
\widetilde{\psi}^{\mbox{aniso}}:=&\overline{(e-f)}\, \gamma\psi_1+(e-f)\gamma\psi_{-1},
\end{align*}
where $\psi_H$ is given by \eqnref{psi:H:expan}.
We recall the density relation for the isotropic case, \eqnref{eqn:main_relation}. 
We can interpret $\alpha_1-\overline{f}$ and $\overline{e-f}$ as $\alpha_1$ and $\beta_1$ in the isotropic case, respectively.
By following the same computation as in the isotropic case, one arrives at the following relations (which correspond to \eqnref{eqn:talpha1} and \eqnref{eqn:talpham}):
\begin{align}\label{eqn:talpha1:aniso}
\overline{e-f}&=\left(1-2\lambda^{\mbox{aniso}}\right)\left[\, \overline{{\tau_1}\left(\lambda^{\mbox{aniso}}\right)}\operatorname{Re}\left\{\alpha_1-\overline{f}\right\}+\iu\overline{{\tau_2}\left(\lambda^{\mbox{aniso}}\right)}\operatorname{Im}\left\{\alpha_1-\overline{f}\right\}\right],\\\label{eqn:talpham:aniso}
\alpha_m &= \bigg[{{\tau_1}\left(\lambda^{\mbox{aniso}}\right)}\operatorname{Re}\left\{\alpha_1-\overline{f}\right\}-\iu{{\tau_2}\left(\lambda^{\mbox{aniso}}\right)}\operatorname{Im}\left\{\alpha_1-\overline{f}\right\}\bigg]\frac{\overline{a_m}}{\gamma^{2m}},\quad m = 2,\cdots,N.
\end{align}

Set $$\alpha_1-\overline{f}=:c_1-\iu c_2.$$
From \eqnref{eqn:talpha1:aniso}, we have
$$
\begin{bmatrix}
\ds e_1-f_1\\
\ds e_2-f_2
\end{bmatrix}
=\boldsymbol{\tau}(-1/2)
\begin{bmatrix}
\ds \RRe\{\alpha_1-\overline{f}\}\\
\ds -\IIm\{\alpha_1-\overline{f}\}
\end{bmatrix}
=\boldsymbol{\tau}(-1/2)
\begin{bmatrix}
\ds c_1\\
\ds c_2
\end{bmatrix}.
$$
In other words, $(c_1,c_2)$ satisfies \eqnref{f:condition}.
We then have
$$
\Scal_{\p\Om}\left[-\psi^{\mbox{aniso}}\right](z)\\
=\mbox{const.}+c_1\operatorname{Re}\big\{z+{{\tau_1}(-1/2)}\mathfrak{F}(z)\big\}+
c_2\operatorname{Im}\big\{z-{{\tau_2}(-1/2)}\mathfrak{F}(z)\big\}
$$
(which corresponds to \eqnref{iso:H:Fexp}) 
and
\begin{align*}
\ds H(x)&=\frac{1}{2}(\alpha_0+\overline{\alpha_0})+\Scal_{\p\Om}[-\psi_H](z)\\
\ds&=\frac{1}{2}(\alpha_0+\overline{\alpha_0})+\Scal_{\p\Om}\left[-\overline{f}\gamma\psi_1-f\gamma\psi_{-1}\right]+\Scal_{\p\Om}\left[-\psi^{\mbox{aniso}}\right](z)\\
\ds&=\mbox{const.}+f_1 x_1+f_2 x_2+c_1\operatorname{Re}\big\{z+{{\tau_1}(-1/2)}\mathfrak{F}(z)\big\}+
c_2\operatorname{Im}\big\{z-{{\tau_2}(-1/2)}\mathfrak{F}(z)\big\}.
\end{align*}

\smallskip
We now prove (a) implies (b). 
Let $\Om$ be a domain of order $N$.
We will construct $H$ with which the corresponding solution $u$ has a uniform strain in $\Om$.
Choose any $(e_1,e_2)\neq(0,0)$ and set $(f_1,f_2)$ and $(c_1,c_2)$ as in \eqnref{f:condition}. We then set $\alpha_1=\overline{f}+c_1-\iu c_2$ and $\alpha_2,\cdots,\alpha_N$ to satisfy \eqnref{eqn:talpham:aniso}
and define $H$ and $\psi_H$ as \eqnref{eqn:H:single} and \eqnref{psi:H:expan} with zero as a constant term.
Then, we have \eqnref{eqn:talpha1:aniso}. Furthermore, it holds that \eqnref{eqn:aniso:relation:ulinear}, i.e., \begin{align}
\Kcal^*_{\p\Om}\left[\, \overline{(e-f)}\, \gamma\psi_1+(e-f)\gamma\psi_{-1}\right]\label{eqn:aniso:relation:ulinear2}
=-\frac{1}{2}\overline{(e+f)}\, \gamma\psi_1 -\frac{1}{2}(e+f)\gamma\psi_{-1}+\psi_H,
\end{align} and \beq\label{H:relation:inOm}
H(x)=\Scal_{\p\Om}[-\psi_H](x)\quad\mbox{in }\Om.
\eeq
Set
\begin{align*}
&\widetilde{\psi}:=\overline{e}\gamma\psi_1+e\gamma\psi_{-1},\\
& \varphi:=\psi_H-\widetilde{\psi},\\
&\widetilde{u}(x)
:=\begin{cases}
H(x)+\Scal_{\p\Om}[\varphi](x)\quad&\mbox{for }x\in\CC\setminus\Om,\\
\Scal_{\p\Om}[-\widetilde{\psi}](x)
\quad&\mbox{for }x\in\Om.
\end{cases}
\end{align*}
It is straightforward to find from \eqnref{eqn:seriesSLpositive} that
$$\widetilde{u}(x)=e_1x_1+e_2x_2 + \mbox{const.} \quad\mbox{for }x\in\Om$$
and, hence,
$$\bm{\sigma}\nabla\widetilde{u}=\bm{\sigma}(e_1,e_2)=(f_1,f_2)\quad\mbox{in }\Om.$$
One can easily show that $\widetilde{u}$ satisfies the boundary transmission condition \eqnref{cond_eqn0:bc} due to \eqnref{nu_complex}, \eqnref{eqn:aniso:relation:ulinear2} and \eqnref{H:relation:inOm}.
Hence, we complete the proof.
\qed

\subsection{Proof of Theorem~\ref{theorem:iso:polnomial:a}}

We define
$$\KOmega[\varphi](x)=\text{p.v.}\,\frac{1}{2\pi}\int_{\partial\Omega}\frac{\left<y-x,\nu_y\right>}{|x-y|^2}\varphi(y)\,d\sigma(y).$$
The NP operator $\Kcal^*_{\p\Om}$ is the $L^2$ adjoint of $\Kcal_{\p\Om}$. Moreover, we have
$$\Scal_{\p\Om}\Kcal^*_{\p\Om}=\Kcal_{\p\Om}\Scal_{\p\Om},$$
which is known as Plemelj's symmetrization principle.
The double-layer potential satisfies the jump relation
$$\Dcal_{\p\Om}[\varphi]\Big|^{\pm}=\left(\mp \frac{1}{2}I+\Kcal_{\p\Om}\right)[\varphi]\quad\mbox{on }\p\Om.$$

Let $H$ be an arbitrary first-order polynomial; then, it holds that
\begin{align*}
H(x)&=\Scal_{\p\Om}\left[-\gamma\alpha\psi_1-\gamma\overline{\alpha}\psi_{-1}\right]+\mbox{const.}\quad\mbox{in }\Om
\end{align*}
for some constant $\alpha\neq0$ and
\beq\label{theorem:general:Eshelby:eqn1}
\nu\cdot\nabla H=\left(\frac{1}{2}I-\Kcal^*_{\p\Om}\right)\left[\gamma\alpha\psi_1+\gamma\overline{\alpha}\psi_{-1}\right].
\eeq
For $\varphi$ defined by \eqnref{varphi:eqn}, we have the relation
\begin{align*}
\Scal_{\p\Om}[\nu\cdot\nabla H]
&=
\Scal_{\p\Om}(\lambda I-\Kcal^*_{\p\Om})[\varphi]\notag\\
&=\lambda \Scal_{\p\Om}[\varphi]-\Scal_{\p\Om}\Kcal^*_{\p\Om}[\varphi]\notag\\
&=\lambda \Scal_{\p\Om}[\varphi]-\Kcal_{\p\Om}\Scal_{\p\Om}[\varphi]\\
&=\left(\lambda +\frac{1}{2}\right)\Scal_{\p\Om}[\varphi]-\Dcal_{\p\Om}\left[\Scal_{\p\Om}[\varphi]\right]\Big|^-\\
&=\left(\lambda+\frac{1}{2}\right)(u-H)-\Dcal_{\p\Om}\left[(u-H)|_{\p\Om}\right]\Big|^-\quad\mbox{on }\p\Om.
\end{align*}
Because both sides are harmonic in $\Om$, we have
\beq\label{eqn:Scal:nuH}
\Scal_{\p\Om}[\nu\cdot\nabla H]=\Big(\lambda+\frac{1}{2}\Big)(u-H)-\Dcal_{\p\Om}\left[(u-H)|_{\p\Om}\right]\quad\mbox{in } \Om.
\eeq

\noindent (b)$\,\Rightarrow\,$(a). If $(\lambda+\frac{1}{2})(u-H)-\Dcal_{\p\Om}\left[(u-H)|_{\p\Om}\right]$ is a harmonic polynomial of degree $N$ in $\Om$, then so is $\Scal_{\p\Om}[\nu\cdot \nabla H]$. From the discussion at the beginning of section~\ref{subsection:mainproof}, we have
\beq\label{nuH:span:N}
\nu\cdot\nabla H\in \mbox{span}\left(\psi_{-N},\psi_{-N+1},\cdots,\psi_{N-1},\psi_N\right)
\eeq
 {and $\left\la \nu\cdot\nabla H,\psi_{- N}\right\ra \neq 0$.}
From \eqnref{theorem:general:Eshelby:eqn1} and Lemma\,\ref{thm:series} (b), we obtain
\beq\label{thm2:Kpsi1:N}
\left\la \Kcal^*_{\p\Om}[\psi_1],\psi_{-k}\right\ra=0\quad\mbox{for all }k\geq N+1
\eeq
and $\left\la \Kcal^*_{\p\Om}[\psi_1],\psi_{- N}\right\ra \neq 0$ (except $N=1$).
Hence, $\Om$ is a domain of order $N$ from Lemma~\ref{psi_1_poly}.
\smallskip

\noindent(a)$\,\Rightarrow\,$(b). Assume that $\Om$ is a domain of order $N$. In fact, we can show that $(\lambda+\frac{1}{2})(u-H)-\Dcal_{\p\Om}\left[(u-H)|_{\p\Om}\right]$ is a harmonic polynomial of degree $N$ in $\Om$ for any uniform loading $H$. From the assumption on $\Om$, it is smooth and \eqnref{thm2:Kpsi1:N} holds. For any uniform loading $H$, from \eqnref{theorem:general:Eshelby:eqn1}, we have \eqnref{nuH:span:N}. From \eqnref{eqn:Scal:nuH}, we deduce that $(\lambda+\frac{1}{2})(u-H)-\Dcal_{\p\Om}\left[(u-H)|_{\p\Om}\right]$ is a harmonic polynomial of degree $N$ in $\Om$.

\qed

\subsection{Proof of Theorem~\ref{theorem:iso:polnomial:b}}

From \eqnref{eqn:main_relation}, we have
$$(\lambda I - \Kcal^*_{\p\Om})[{\psi_u}] =\left( \lambda- \frac{1}{2} \right)\psi_H.$$
As generalizations of \eqnref{psi:H:expan} and \eqnref{psi:u:expan}, let $\psi$ and $\widetilde{\psi}$ have series expansion of infinite order such that
$$
{\psi_H}=\sum_{m=1}^{\infty} \left(\alpha_m m \gamma^m\psi_m+\overline{\alpha_m}m\gamma^m\psi_{-m}\right), \quad
{\psi_u}=\sum_{m=1}^{\infty} \left(\beta_m m \gamma^m \psi_m+\overline{\beta_m} m \gamma^m\psi_{-m}\right).
$$
Then we have
$$(\lambda I - \Kcal^*_{\p\Om})\left[\sum_{m=1}^\infty m \gamma^m \left(\beta_m \psi_m+\overline{\beta_m} \psi_{-m}\right)\right] = \left( \lambda- \frac{1}{2} \right) \sum_{m=1}^\infty m \gamma^m \left(\alpha_m \psi_m+\overline{\alpha_m} \psi_{-m}\right).$$
By using Lemma \ref{thm:series}(b), we obtain
\begin{align*}
&\lambda \sum_{m=1}^\infty m \gamma^m \left(\beta_m \psi_m+\overline{\beta_m} \psi_{-m}\right) - \frac{1}{2} \sum_{m=1}^\infty m \gamma^{-m} \left(\sum_{k=1}^\infty \overline{\beta_k c_{km}} \psi_{m} +  \sum_{k=1}^\infty \beta_k c_{km} \psi_{-m}  \right)\\
& = \left( \lambda- \frac{1}{2} \right) \sum_{m=1}^\infty m \gamma^m \left(\alpha_m\psi_m+\overline{\alpha_m} \psi_{-m}\right).
\end{align*}
Hence, we have
$$\left( \lambda- \frac{1}{2} \right) \alpha_m = \lambda \beta_m - \frac{1}{2} \sum_{k=1}^\infty \overline{\beta_k c_{km}} \gamma^{-2m}.$$
In other words,
\beq\label{rel}
\left( \lambda- \frac{1}{2} \right) \bm{\alpha}  = \lambda \bm{\beta}   - \frac{1}{2} \overline{\bm{\beta} C} \gamma^{-2\NN},
\eeq
where $\bm{\alpha}  = (\alpha_m)_{m=1}^\infty$ and $\bm{\beta}  = (\beta_m)_{m=1}^\infty$ are row vectors, $C=(c_{km})_{k,m=1}^\infty$ is a semi-infinite matrix, and $\gamma^{-2\NN}$ is a diagonal matrix for which the $(m,m)$-entry is $\gamma^{-2m}$.

The conjugate on both sides of \eqnref{rel} is as follows,
\beq\label{rela}
 \overline{\bm{\beta} } =  \left( 1- \frac{1}{2\lambda} \right) \overline{\bm{\alpha}} + \frac{1}{2\lambda} \bm{\beta}  C \gamma^{-2\NN}.
\eeq
By substituting \eqnref{rela} into \eqnref{rel}, we finally obtain
$$\bm{\beta}  = \left( \lambda-\frac{1}{2} \right) \left( \lambda\bm{\alpha}\gamma^{2\NN} + \frac{1}{2} \overline{\bm{\alpha} C} \right)  \left( \lambda^2 I - \frac{\gamma^{-2\NN}C\gamma^{-2\NN}\overline{C}}{4} \right)^{-1} \gamma^{-2\NN}.$$
For the invertibility of $\lambda^2 I - \frac{\gamma^{-2\NN}C\gamma^{-2\NN}\overline{C}}{4}$, we refer the reader to see \cite{Choi:2020:ASR}.

\qed

\section{Conclusion}\label{ref:conclusion}
In this paper, we investigated the Eshelby uniformity principle for anti-plane elasticity. We extended the uniformity principle to domains of general shape with polynomial loadings by using the series expression of the solution to the transmission problem, which was recently developed in \cite{Jung:2018:SSM}.
Also, we derived an explicit expression for the solution in matrix form using the Grunsky coefficients for the isotropic case.

%
\ifx \bblindex \undefined \def \bblindex #1{} \fi\ifx \bbljournal \undefined
  \def \bbljournal #1{{\em #1}\index{#1@{\em #1}}} \fi\ifx \bblnumber
  \undefined \def \bblnumber #1{{\bf #1}} \fi\ifx \bblvolume \undefined \def
  \bblvolume #1{{\bf #1}} \fi\ifx \noopsort \undefined \def \noopsort #1{} \fi

\end{document}